\newtheorem{prop}{Proposition}
\begin{document}

\title{Two-Stage Distributed Beamforming Design in Cell-Free Massive MIMO ISAC Systems}

\author{Leonardo Leyva,~\IEEEmembership{Graduate Student Member,~IEEE,} Daniel Castanheira, Adão Silva,~\IEEEmembership{Member,~IEEE,} and Atílio Gameiro 
\thanks{Manuscript received April 19, 2021; revised August 16, 2021.}
\thanks{Leonardo Leyva, Daniel Castanheira, Adão Silva and Atílio Gameiro are with the Instituto de Telecomunicações, and with Departamento de Electrónica, Telecomunicações e Informática, Universidade de Aveiro, 3810-164, Aveiro, Portugal. (E-mails: leoleval@av.it.pt, dcastanheira@av.it.pt, asilva@av.it.pt, amg@ua.pt).}}

\markboth{Journal of \LaTeX\ Class Files,~Vol.~14, No.~8, August~2021}%
{Shell \MakeLowercase{\textit{et al.}}: A Sample Article Using IEEEtran.cls for IEEE Journals}


\maketitle

\begin{abstract}
Integrating radio-sensing functionalities into future cell-free (CF) wireless networks promises efficient resource utilization and facilitates the seamless roll-out of applications such as public safety and smart infrastructure. While the beamforming design problem for the CF integrated sensing and communication (ISAC) paradigm has been addressed in the literature, existing methods rely on centralized signal processing, leading to fronthaul load and scalability issues.   
This paper presents a two-stage beamforming design for the CF ISAC paradigm, aiming to significantly reduce the fronthaul load by distributing the signal processing tasks between the central unit (CU) and the access points (APs). The design optimizes the sum signal-to-interference-plus-noise ratio (SINR) for communication users, subject to per-AP power constraints and signal-to-noise ratio (SNR) requirements for radio-sensing purposes. The resulting optimization problems are non-convex and challenging to solve. To address this, we employ a majorization-minimization (MM) approach, which decomposes the problem into simpler convex subproblems. 
The results show that the two-stage beamforming design achieves performance comparable to centralized methods while substantially reducing the fronthaul load, thus minimizing data transmission requirements over the fronthaul network. This advancement highlights the potential of the proposed method to enhance the efficiency and scalability of cell-free MIMO ISAC systems. 
\end{abstract}

\begin{IEEEkeywords}
Cell-Free, MIMO, Beamforming, Optimization, ISAC, Split Signal Processing.
\end{IEEEkeywords}

\section{Introduction}
\IEEEPARstart{T}{he} integration of radio-sensing and communication functionalities, an innovative paradigm known as integrated sensing and communication (ISAC), stands as a key component in the landscape of next-generation wireless networks. ISAC promises to revolutionize applications by enabling simultaneous data transmission and environmental sensing. 

\subsection{Background} 

ISAC represents a significant shift from traditional wireless systems, where radio-sensing and communication functions are handled by separate and dedicated systems. By tightly integrating these functions, ISAC promises more efficient use of the radio spectrum and reduces hardware and infrastructure resources  \cite{liu2020, paul2017, leyva2021}. This integration is not only advantageous in terms of cost and resource savings but also enables new use cases in several areas, such as intelligent transportation systems (ITS), autonomous driving, smart home, weather monitoring, and smart cities \cite{zhang2022, cui2021, leyva2021}. For example, in autonomous vehicles, real-time traffic sensing and communication between cars and roadside units (RSUs) are essential. Smart cities necessitate efficient environmental monitoring and data exchange for effective urban infrastructure management.   

The potential of ISAC has driven growing interest in integrating radio-sensing functions into future wireless networks. The last decade has witnessed much academic research on ISAC \cite{zhang2022, cui2021, liu2023}. More recently, standards development organizations like 3GPP and ETSI have initiated pre-standardization activities to support this integration \cite{3gpp2023ts, 3gpp2023tr}. In the context of 6G, ISAC technologies are essential for enabling precise localization and tracking of connected and unconnected objects, environment reconstruction, and enhanced situational awareness \cite{leyva2021, liu2022isc, giordani2020toward}. These capabilities are vital for advanced applications such as public safety, immersive experiences, and smart infrastructure. While the possibilities presented by ISAC in 6G are promising, they also introduce significant research challenges in areas such as waveform design, including beamforming optimization, and networked sensing for wireless networks \cite{zhang2022}.

In the ISAC context, beamforming involves designing precoding matrices that jointly and efficiently support both communication and radio-sensing functions. This area has seen substantial research in recent years, with contributions to the design of digital and fully/partially-connected hybrid beamforming \cite{liu18, eldar20, leyva2024hybrid}.
For example, in \cite{liu18}, the authors design precoding matrices such that the transmitted beampattern approximates an optimal radio-sensing probing beam, while also meeting SINR constraints for communication users. They assume that the communication signals are used as the radio-sensing probing beam. The results demonstrated that the ISAC transmission yields better overall performance than the coexistence scenario, where antennas are separated to deliver communication and radio-sensing operations separately.
The authors of \cite{eldar20} also proposed a radio-sensing centric method, i.e., minimizes a radio-sensing loss function while satisfying the SINR of communication users and power budget constraints. Differently, to \cite{liu18}, the authors assumed different communication and radio-sensing signals. The proposed method increases the degrees of freedom (DoF) for the design of the radio-sensing waveform but introduces interference at the communication receivers.
More recently, \cite{leyva2024hybrid} proposed the use of hybrid beamforming instead of fully-digital beamforming in a multi-user, multi-beam ISAC scenario, where communication signals are also employed for radio-sensing. This iterative approach leverages the fully-connected hybrid architecture to efficiently integrate communication and radio-sensing functionalities while significantly reducing hardware costs and energy consumption.
Previous contributions concentrate on a single base station (BS), which has inherent limitations compared to distributed networks. More specifically, using multiple BSs can provide superior communication coverage and more accurate radio-sensing performance by leveraging channel diversity.

\subsection{Related Work}

Distributed networks have recently garnered attention for ISAC systems. The advantages of a distributed networked ISAC include enhanced coverage, increased diversity, a reduced likelihood of object shadowing, and greater robustness to environmental changes. These benefits have prompted several contributions in the area of optimal precoding design for multiple-BS ISAC approaches \cite{Li2022, Chen2023, Cheng2024}.

In \cite{Li2022}, the authors considered a two-cell interfering ISAC scenario. They proposed optimal/sub-optimal solutions for precoders and filters that minimize the transmit power while satisfying SINR constraints for communication users and targets. Although the results show benefits compared to benchmark schemes, the proposed method is limited to two cells and single-target detection per BS.    
Joint transmission (JT) coordinated multi-point (CoMP) ISAC system, where $L$ BSs are coordinated by a central processing unit (CPU). The proposed algorithm optimizes the waveform and clustering designs, from radar- and communication-centric perspectives. The results showcased better performance for optimal clustering design compared to a static design benchmark.   
Similarly, the authors of \cite{Cheng2024} considered a CoMP ISAC system. Unlike \cite{Chen2023}, which considers a JT framework, they assumed a coordinated beamforming (CB) approach, where each BS only serves its corresponding user equipments (UEs). It is considered two joint detection scenarios with and without time synchronization among the BSs.  The results demonstrated the benefits of time synchronization among BSs for improved joint detection and communication performance. The results also showed a higher detection probability than other benchmark schemes. However, the CB CoMP may cause severe intercell interference especially when the number of UEs becomes large. 

Cell-free (CF) wireless communication systems offer advantages over classical cellular architectures by eliminating inter-cell interference and providing uniform quality of service (QoS) across the coverage area, thus enhancing the user experience \cite{ngo2017cellfree}. The distributed nature of CF systems leverages spatial diversity to improve radio-sensing performance and increase environmental awareness, making it a promising approach for enabling ISAC capabilities. Recently, there has been increased interest in the CF-ISAC paradigm, leading to several contributions \cite{Behdad2022, Behdad2023, Cao2023, Demirhan2023, Mao2023, Mao2023GlobeCom, Cao2023CF, Liu2024}.

The authors of \cite{Behdad2022} considered a centralized CF paradigm for an ISAC scenario, where the Access Points (APs) jointly perform downlink communications and multi-static sensing of a single target. They proposed power allocation strategies to maximize the sSNR while satisfying communication constraints. The considered algorithm is developed under the condition that the target is present and no clutter.
More recently, \cite{Behdad2023} extended the work initiated in \cite{Behdad2022} by considering target-free scenarios and the presence of clutter. However, the algorithms in \cite{Behdad2022,Behdad2023} assume a unit-norm regularized zero-forcing (RZF) precoding vector, which reduces the system performance. 
Additionally, \cite{Cao2023} investigated a user-centric CF ISAC system, where, besides the power allocation scheme, the authors also proposed an efficient scheduling method to optimize the sum-rate performance. However, they adopted conjugate beamforming for the precoder, which further reduces system performance. Although power allocation and scheduling are interesting for the CF ISAC paradigm, the beamforming design brings additional DoF that could enhance system performance.

Spatial optimization through beamforming design is a crucial aspect of CF ISAC systems, and its importance has grown significantly in recent years \cite{Demirhan2023, Mao2023GlobeCom, Mao2023, Cao2023CF, Liu2024}.  
In \cite{Demirhan2023}, the authors designed the transmit beamforming in a CF ISAC scenario, where the transmitter APs jointly handle downlink communication with multiple users and steer a radio-sensing probing beam toward a target. They focused on maximizing the sSNR for multi-static sensing while subject to SINR and per-AP power budget constraints. The proposed design outperforms communication-prioritized and sensing-prioritized baseline methods used as benchmarks.
The authors of \cite{Mao2023GlobeCom} considered a scenario similar to that in \cite{Demirhan2023}, but they derived a solution by minimizing the mean square error in the sensing beampattern matching problem, subject to power budget constraints and the ergodic rate requirements of communication users. They assumed imperfect CSI and that each target is assigned to a single AP.  
In \cite{Mao2023}, the work was extended to derive the communication-sensing (C-S) region, which evaluates the trade-off between radio-sensing-only, communication-only, and joint beamforming approaches.  
More recently, \cite{Liu2024} addressed the problem of joint beamforming/filter design and transmitter-receiver AP mode selection. The APs simultaneously handle downlink communications and detect several point-like targets. The authors jointly derived the transmit beamforming, receiver filters, and AP mode configuration that maximizes the sum of the sensing SNR, subject to power budget constraints, SINR requirements, and BS mode selection constraints. The results highlighted the importance of BS mode selection in a CF ISAC paradigm.
The authors of \cite{Cao2023CF} proposed a communication-centric approach, where the maximization of the communication rate is prioritized, with radar estimation rate and power budget constraints considered. This approach is particularly advantageous for applications prioritizing communication performance, ensuring high data rates are maintained even when radio-sensing functionalities are required.

In \cite{Demirhan2023, Mao2023GlobeCom, Mao2023, Cao2023CF, Liu2024}, signal processing is conducted at the CU. This centralized approach leads to high computational complexity and increased fronthaul load, particularly due to the transmission of the CSI. As a result, these solutions exhibit scalability challenges when the number of APs, UEs, and antenna array dimensions increase in CF ISAC scenarios.

\subsection{Contributions}

This work addresses the limitations of existing approaches by proposing a novel algorithm that distributes signal-processing tasks between the CU and the APs. This method reduces computational complexity at the CU and alleviates the fronthaul load during beamforming design, enabling a more scalable CF ISAC system. The key contributions of this work are as follows:

\begin{enumerate}
	\item Novel Two-stage Distributed Signal Processing Approach for Beamforming Design in CF-ISAC: We introduce a two-stage distributed signal processing framework where the beamforming design tasks are divided between the CU and APs. This division reduces the processing burden on the CU.
	\item Compact Information Exchange to Reduce Fronthaul Load: To reduce the high data transfer usually required between the APs and the CU in CF scenarios, we propose transmitting compact signal information, such as equivalent channels and powers, between the APs and the CU. This approach significantly reduces the fronthaul load by minimizing the amount of data that needs to be transmitted, without compromising the performance of the system.
	\item Iterative Algorithms Based on Majorization-Minimization (MM): We develop iterative algorithms using the Majorization-Minimization (MM) technique to solve the optimization problems at both the CU and the APs. 
\end{enumerate}

The results demonstrate similar communication performance while reducing computational complexity and fronthaul load compared to the centralized-based solutions.

\subsection{Paper Structure and Notation}

The structure of the paper is as follows. Section \ref{sectII_CF} introduces the CF ISAC system model, covering the transmitted signal, communication receiver, and radio-sensing models. Section \ref{distBemFra} introduces a distributed beamforming framework, where the precoders are partitioned between the APs and CU. Section \ref{sectIII_CF} details the proposed beamforming design method, including the problem formulation, iterative two-stage distributed approach, and the algorithms for solving the optimization problems at the APs and CU. Section \ref{sectIV_CF} evaluates and compares the proposed algorithm to the centralized solution. Finally, Section \ref{sectV_CF} summarizes the main conclusions of the paper.

\textit{Notation}: 	Complex scalars are represented by normal font, i.e., $a$, vectors and matrices are denoted by bold lowercase and bold uppercase letters, respectively, i.e., $\mathbf{a}$, and $\mathbf{A}$.  $\mathbb{E}$ stands as the expectation operator, $\mathbb{C}$ and $\mathbb{R}$ denote the set of complex and real numbers, respectively. Besides, $[\,  \cdot \,]^T$ , $[ \, \cdot \,]^*$, and $[\,  \cdot \,]^H$, indicate the transpose, conjugate and Hermitian transpose operations, respectively. Finally, $\mathrm{diag}(\cdot)$ denotes the diagonal matrix of a vector, $\Re(\cdot)$ represents the real part of a complex scalar, and $||\cdot||_F$ denotes the Frobenius norm.

\section{System Model} \label{sectII_CF}

This section outlines the system model considered in this work. Initially, it introduces the cell-free (CF) integrated sensing and communications (ISAC) scenario. Then, the transmitted signal model is presented. Subsequently, the communication and radio-sensing models are described in detail.      

\subsection{Cell-Free ISAC Scenario}

Fig. \ref{00_fig_1_multistatic_CF_isac_scenario} illustrates the considered CF ISAC scenario where the APs are divided into $M$ transmitters ($\mathrm{AP}_{\mathrm{tx}}$) and $N$ receivers ($\mathrm{AP}_{\mathrm{rx}}$). These APs are connected to the CU via fronthaul links. The set of $\mathrm{AP}_{\mathrm{tx}}$/$\mathrm{AP}_{\mathrm{rx}}$ forms a multistatic topology within the CF ISAC scenario, that enables radio-sensing applications \cite{leyva2021}. Also, $\mathrm{AP}_{\mathrm{tx}}$ and $\mathrm{AP}_{\mathrm{rx}}$ hold uniform linear arrays (ULAs) with $N_{\mathrm{tx}}$ and $N_{\mathrm{rx}}$ antenna elements, respectively.   

\begin{figure}[!htb]
	\centering
	\includegraphics[width=0.49\textwidth]{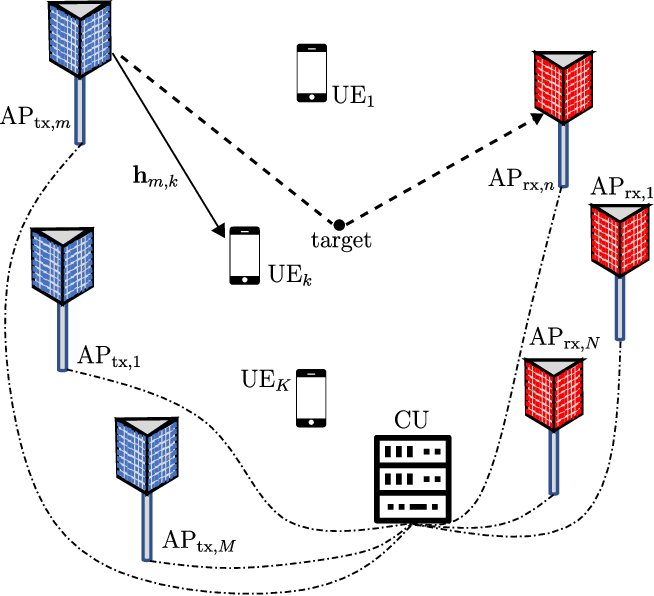}
	\caption{Illustration of the considered multistatic CF mMIMO ISAC scenario.}
	\label{00_fig_1_multistatic_CF_isac_scenario}
\end{figure} 

This work considers a scenario where the set of $\mathrm{AP}_{\mathrm{tx}}$ communicates in the downlink with $K$ single-antenna UEs, while simultaneously steering a single probing beam in the desired direction for target detection and estimation.  Additionally, the set of $\mathrm{AP}_{\mathrm{rx}}$ is configured to function exclusively as radio-sensing receivers. Specifically, the signals received by the $\mathrm{AP}_{\mathrm{rx}}$ are processed to detect and estimate objects in the environment.

We make the following assumptions about the CF ISAC scenario: 1) each UE is connected to all \(\mathrm{AP}_{\mathrm{tx}}\), and 2) all \(\mathrm{AP}_{\mathrm{tx}}\) steer their radio-sensing beam towards the same geographic location (a point-like target, as shown in Fig. \ref{00_fig_1_multistatic_CF_isac_scenario}). Additionally, we consider that all APs are synchronized and that the channel state information (CSI) between the \(\mathrm{AP}_{\mathrm{tx}}\) and the UEs is perfectly known by the \(\mathrm{AP}_{\mathrm{tx}}\). We emphasize that each \(\mathrm{AP}_{\mathrm{tx}}\) has knowledge only of its local channel to the UEs, and not the channels of other \(\mathrm{AP}_{\mathrm{tx}}\) to the UEs.
 
Regarding the radio-sensing function, the set of $\mathrm{AP}_{\mathrm{tx}}$/$\mathrm{AP}_{\mathrm{rx}}$ cooperatively scan the environment under the orchestration of the CU. For this, the CU calculates the angles corresponding to a specific position where an alleged target might be located and forwards this information to the APs, allowing them to direct the radio-sensing beams toward that position. Additionally, both the $\mathrm{AP}_{\mathrm{tx}}$ and $\mathrm{AP}_{\mathrm{rx}}$ have a line of sight (LoS) to the target, though a LoS path between the $\mathrm{AP}_{\mathrm{tx}}$ and $\mathrm{AP}_{\mathrm{rx}}$ is not required.

\subsection{Transmitted Signal}
 
This subsection presents the joint communication and radio-sensing signal model used in the downlink transmissions. More specifically, each $\mathrm{AP}_{\mathrm{tx}}$ transmits $K$ data streams towards the UEs, where the communication data streams are also used for radio-sensing purposes \cite{liu18,liu2020,leyva2024hybrid}. The signal transmitted by the $m$th $\mathrm{AP}_{\mathrm{tx}}$ is expressed as
\begin{equation} 
	\label{II_eq01_tx_signal}
	\begin{aligned}
		\mathbf{x}_m & 	= \mathbf{F}_m \mathbf{s}\\
		&	= \sum_{k=1}^K \mathbf{f}_{m,k}s_k
	\end{aligned}
\end{equation}
where $\mathbf{f}_{m,k} \in \mathbb{C}^{N_{\mathrm{tx}}}$ represents the beamforming vector corresponding to the $k$th data stream, and $\mathbf{s} = [s_1, \cdots ,s_K]^T$ is the vector of the communication symbols. 
The $\mathrm{AP}_{\mathrm{tx}}$ and the CU coordinate to design the precoders $\mathbf{F}_m$, with the assumption that the CSI is perfectly estimated and locally known by each $\mathrm{AP}_{\mathrm{tx}}$.  
The transmitted data streams $s_k$ are independent and have unit power, i.e., \(\mathbb{E}[\mathbf{s}\mathbf{s}^H] = \mathbf{I}_{K}\). Additionally, the power budget for the beamforming vector is constrained by \(P_m\), formally
\begin{equation} \label{powerBudCF}
	||\mathbf{F}_m||_F^2 \leq P_m.
\end{equation}

\subsection{Communication Model}

As previously mentioned, the transmission of communication data follows a CF approach. Consequently, the signal received by the $k$th UE is modeled as the sum of the signals transmitted by the $M$ $\mathrm{AP}_{\mathrm{tx}}$, as described in \eqref{II_eq01_tx_signal}. The received signal \(y_k\) is given by
\begin{equation}
	\begin{aligned} \label{II_eq3_rx_UE_signal}
		y_{k} 	& = \sum_{m=1}^M\mathbf{h}_{m,k}^H\mathbf{x}_m + n_{k} \\
			    & = \underbrace{\sum_{m=1}^M \mathbf{h}_{m,k}^H \mathbf{f}_{m,k} s_k}_{\text{Desired Signal (DS)}} + \underbrace{\sum_{\substack{i=1 \\ i \neq k}}^K \sum_{m=1}^M  \mathbf{h}_{m,k}^H\mathbf{f}_{m,i} s_i}_{\text{Multi-user Interference (MUI)}} + \underbrace{n_k}_{\text{Noise}}
	\end{aligned}
\end{equation}
where \( n_k \sim \mathcal{CN}(0, \sigma_k^2) \) represents complex additive white Gaussian noise (AWGN) with zero mean and variance \( \sigma_k^2 \). The term \(\mathbf{h}_{m,k} \in \mathbb{C}^{N_{\mathrm{tx}}}\) denotes the communication channel between the \(m\)th $\mathrm{AP}_{\mathrm{tx}}$ and the \(k\)th UE.

The communication channel $\mathbf{h}_{m,k}$ is modeled as a narrowband block-fading propagation channel, as described in \cite{narrCite}. Formally, it is expressed as
\begin{equation} \label{channMo}
	\mathbf{h}_{m,k} = \frac{1}{\sqrt{L}}\sum_{l=1}^{L} \alpha_{m,k}^{(l)} \mathbf{a}_{N_{\mathrm{tx}}}(\psi_{m,k}^{(l)}), 
\end{equation}
where $L$ is the number of paths, $\alpha_{m,k}^{(l)}$ represents the channel gain, and $\mathbf{a}_{N_{\mathrm{tx}}}(\psi_{m,k}^{(l)}) \in \mathbb{C}^{N_{\mathrm{tx}}}$ represents the transmit array response vector of the $l$th path, with $\psi_{m,k}^{(l)}$ denoting the angle of departure (AoD). For an ULA with $N_{\mathrm{tx}}$ antenna elements, the array response vector in the direction of $\psi$ can be represented as
\begin{equation} 
	\mathbf{a}_{N_{\mathrm{tx}}}(\psi) = \left[ 1, e^{j\frac{2\pi}{\lambda} d \sin (\psi)}, \cdots, e^{j\frac{2\pi}{\lambda} d (N_{\mathrm{tx}} - 1) \sin (\psi)} \right]^T 
	\label{eqArray}
\end{equation}
where $\lambda$ and $d$ represent the signal wavelength and the antenna spacing, respectively.
 
Building on the established conditions of independence among data streams and between the data streams and receiver noise, we can now derive the total power of the received signal in \eqref{II_eq3_rx_UE_signal}. Under these assumptions,  the total power of the received signal in \eqref{II_eq3_rx_UE_signal} can be derived as
\begin{equation}
	\begin{aligned}
		\mathbb{E}[|y_k|^2] & =  \mathbb{E}[|\text{DS}|^2] +  \mathbb{E}[|\text{MUI}|^2] +  \mathbb{E}[|\text{Noise}|^2]\\
		& = P_{\text{DS},k} + P_{\text{MUI},k} + \sigma_k^2,
	\end{aligned}
\end{equation} 
where \( P_{\text{DS},k} \) and \( P_{\text{MUI},k} \) are defined as
\begin{equation} \label{desSig_CF}
	P_{\text{DS},k} = \left|\sum_{m=1}^M \mathbf{h}_{m,k}^H \mathbf{f}_{m,k}\right|^2,
\end{equation}
and
\begin{equation} \label{mui_CF}
	P_{\text{MUI},k} = \sum_{\substack{i=1 \\ i \neq k}}^K \left|\sum_{m=1}^M \mathbf{h}_{m,k}^H \mathbf{f}_{m,i}\right|^2.
\end{equation}
Using this decomposition, the SINR at the \( k \)th UE is given by
\begin{equation} 
	\label{II_eq06_SINR}
	\mathrm{SINR}_{k} = \frac{P_{\text{DS},k}}{P_{\text{MUI},k} + \sigma_k^2}.
\end{equation}
Here, the numerator corresponds to the power of the desired signal received by the \( k \)th UE, while the denominator accounts for the total interference from other users and the noise power at the receiver.

\subsection{Radio-Sensing Model}
This work considers a multi-static radar configuration, where the $\mathrm{AP}_{\mathrm{tx}}$s direct a radio-sensing probing beam in the direction of a point-like target. It is assumed that a LoS path exists both from the $\mathrm{AP}_{\mathrm{tx}}$ to the target and from the target to the $\mathrm{AP}_{\mathrm{rx}}$. Additionally, the channel between any $\mathrm{AP}_{\mathrm{tx}}$ and $\mathrm{AP}_{\mathrm{rx}}$ pair is restricted to this combination of LoS paths. Therefore, the channel between the $m$-th $\mathrm{AP}_{\mathrm{tx}}$ and the $n$-th $\mathrm{AP}_{\mathrm{rx}}$ consists of a single path \cite{Demirhan2023}, as shown in Fig. \ref{00_fig_1_multistatic_CF_isac_scenario}. The signal received at the $n$th $\mathrm{AP}_{\mathrm{rx}}$ is given by
\begin{equation}
	\mathbf{y}_n = \sum_m \hat\alpha_{m,n} \mathbf{a}_{N_{\textit{rx}}}(\phi_n) \mathbf{a}_{N_{\textit{tx}}}^H(\theta_m)\mathbf{x}_m + \mathbf{n}_n
\end{equation}
where $\hat\alpha_{m,n}$ is the path gain, which includes the effect of the path-loss, and radar cross section (RCS) of the target \cite{Demirhan2023}. The variables $\theta_m$ and $\phi_n$ represent the angle of departure (AoD) from the $m$th $\mathrm{AP}_{\mathrm{tx}}$ and the angle of arrival (AoA) at the $n$th $\mathrm{AP}_{\mathrm{rx}}$, respectively.
The signals received through the \(N_{\textrm{rx}}\) antennas of the \(n\)th \(\mathrm{AP}_{\mathrm{rx}}\) are locally combined as follows
\begin{equation} \label{recCfsig}
	\begin{aligned}
		y_n & = \mathbf{g}_n^H\mathbf{y}_n\\ 
		& = \gamma_n\sum_m \hat\alpha_{m,n}\mathbf{a}_{N_{\textit{tx}}}^H(\theta_m)\mathbf{F}_m\mathbf{s} + \mathbf{g}_n^H\mathbf{n}_n,
	\end{aligned}
\end{equation}
where $\mathbf{g}_n \in \mathbb{C}^{N_{\text{rx}}}$ represents the $n$th combiner, and $\gamma_n = \mathbf{g}_n^H\mathbf{a}(\phi_n)$. The signals in \eqref{recCfsig}, from the \(N\) $\mathrm{AP}_{\mathrm{rx}}$, are forwarded to the central unit (CU) and coherently added. Thus, the signal at the CU is given by
\begin{equation}
	\label{II_eq9_rx_signal_CU}
		y  =  \sum_n \sum_m  \gamma_n  \hat\alpha_{m,n} \mathbf{a}_{N_{\textit{tx}}}^H(\theta_m)\mathbf{F}_m\mathbf{s} + \sum_n\mathbf{g}_n^H\mathbf{n}_n.
\end{equation}

We consider that the noise at different $\mathrm{AP}_{\mathrm{rx}}$ are independent and identically distributed (\textit{i.i.d.}) random variables.  In addition, we employ the Swerling I model, which is useful for modeling targets with multiple scattering elements, where the scatterings between different transmitter/receiver pairs are independent \cite{modernRadar}. That means that there is no correlation between $\hat\alpha_{m,n}$ for the different $\mathrm{AP}_{\mathrm{tx},m}$/$\mathrm{AP}_{\mathrm{tx},n}$ pairs. Consequently, the expected power of \eqref{II_eq9_rx_signal_CU} can be obtained as
\begin{equation}
	\mathbb{E}[|y|^2] = \sum_m \sum_n \sigma_{m,n}^2 \left\|\gamma_n \mathbf{a}_{N_{\textit{tx}}}^H(\theta_m) \mathbf{F}_m \right\|_F^2 + \sum_n \left\|\mathbf{g}_n \right\|^2 \sigma_n^2.
\end{equation}
From the above, it follows that the sensing SNR (sSNR) is given by
\begin{equation} \label{sSNR_CF}
	\mathrm{sSNR} = \frac{\sum_m \sum_n \sigma_{m,n}^2 \left\|\gamma_n \mathbf{a}_{N_{\textit{tx}}}^H(\theta_m) \mathbf{F}_m \right\|_F^2}{\sum_n \left\|\mathbf{g}_n \right\|^2 \sigma_n^2},
\end{equation}
where $\sigma_{m,n}^2$ denotes the variance of the sensing channel.

\section{Distributed Beamforming Framework} \label{distBemFra}
 
In prior work on beamforming design for CF ISAC \cite{Demirhan2023, Mao2023GlobeCom, Mao2023, Cao2023CF, Liu2024}, the computation of the precoders is entirely performed at the CU, requiring significant data exchange between the $\mathrm{AP}_{\mathrm{tx}}$s and the CU. In this setup, the $\mathrm{AP}_{\mathrm{tx}}$s function as distributed antennas, leading to two main drawbacks. First, the computational complexity at the CU is increased due to the high dimensionality of the optimization variables. Second, fronthaul capacity limitations arise, as the CSI must be transmitted from the $\mathrm{AP}_{\mathrm{tx}}$s to the CU for beamforming design, followed by the transmission of the precoding matrices or signals from the CU back to the $\mathrm{AP}_{\mathrm{tx}}$s.
 
To mitigate these drawbacks, this section introduces a distributed beamforming framework in which the precoding vectors $\mathbf{f}_{m,k} \: \forall m,k$ are divided into two parts. Specifically, we define these precoding vectors as follows
\begin{equation} \label{distAPPCF}
	\mathbf{f}_{m,k} = \delta_{m,k} \mathbf{w}_{m,k},
\end{equation}
where $\mathbf{w}_{m,k} \in \mathbb{C}^{N_{\text{tx}}}$ represents the local precoding vector, computed at each $\mathrm{AP}_{\mathrm{tx}}$, and $\delta_{m,k}$ is a complex weight computed at the CU. Fig. \ref{01_fig_2_signal_formation} illustrates how the transmitted signal is formed. In the following, any reference to the precoder design refers to the design of the distributed beamforming framework. The formulation up to this point can be straightforwardly rewritten by simply substituting \eqref{distAPPCF}.

\begin{figure}[!htb]
	\centering
	\includegraphics[width=0.47\textwidth]{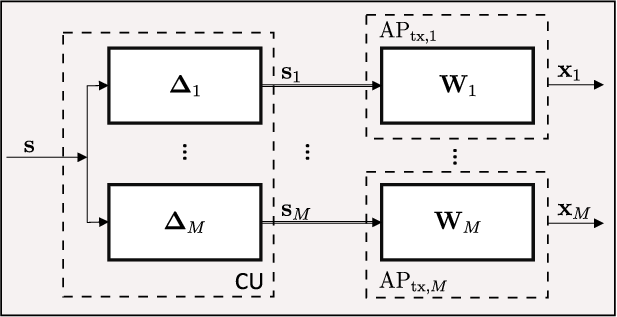}
	\caption{Diagram of the generation of the transmitted signal in a distributed paradigm within CF ISAC.}
	\label{01_fig_2_signal_formation}
\end{figure} 
 
As depicted in Fig. \ref{01_fig_2_signal_formation}, the data stream vector $\mathbf{s}$ is precoded at the CU using $M$ diagonal precoding matrices. The resulting signals are then transmitted through the fronthaul to each of the $M$ $\mathrm{AP}_{\mathrm{tx}}$. Formally, the signal sent from the CU to the $m$th $\mathrm{AP}_{\mathrm{tx}}$ is given by
\begin{equation} \label{modSignal}
	\mathbf{s}_m = \mathbf{\Delta}_m \mathbf{s},
\end{equation}
where \(\mathbf{\Delta}_m = \mathrm{diag}\{\delta_{m,1}, \ldots, \delta_{m,K}\} \in \mathbb{C}^{K \times K}\) represents the central precoding matrix of the $m$th $\mathrm{AP}_{\mathrm{tx}}$ processed by the CU. Then, \eqref{modSignal} is precoded using a local precoding matrix by each $\mathrm{AP}_{\mathrm{tx}}$. Therefore, the transmitted signal \eqref{II_eq01_tx_signal} from the $m$th $\mathrm{AP}_{\mathrm{tx}}$ can be reformulated as
\begin{equation} \label{modSignallocal} 
	\mathbf{x}_m = \mathbf{W}_m \mathbf{s}_m,
\end{equation}
where $\mathbf{W}_m \in \mathbb{C}^{N_{\mathrm{tx}} \times K}$ is the precoder corresponding to the $m$th $\mathrm{AP}_{\mathrm{tx}}$. From \eqref{modSignal} and \eqref{modSignallocal}, it follows that $\mathbf{F}_m = \mathbf{W}_m \mathbf{\Delta}_m$.

\section{Two-stage Distributed Beamforming Design Algorithm} \label{sectIII_CF}

This section introduces a two-stage distributed algorithm for designing the central and local precoders, \(\mathbf{\Delta}_m\) and \(\mathbf{W}_m\), respectively.
The problem is formulated as the maximization of the sum of the SINR \eqref{II_eq06_SINR} for the UEs, subject to the power budget for each \(\mathrm{AP}_{\mathrm{tx}}\) and the minimum sSNR requirements for radio-sensing purposes.
However, the proposed optimization problem is a challenging and computationally expensive non-convex problem. To address this, we introduce a solution based on canceling the inter-user interference.
The interference-free precoders are designed in a two-stage distributed manner between the \(\mathrm{AP}_{\mathrm{tx}}\)s and the CU. 
However, both problems to be solved at the \(\mathrm{AP}_{\mathrm{tx}}\)s and the CU remain non-convex. Therefore, we leverage a majorization-minimization (MM) framework to obtain a simpler lower-bound surrogate function, from which an iterative optimization algorithm can be derived to approximate a solution to the original problem.

\subsection{Problem Formulation}

The objective of this work is to design  \(\mathbf{\Delta}_m\), and \(\mathbf{W}_m\) that maximize the sum of the downlink SINR for the $K$ UEs in a CF ISAC system, considering a multi-user and single-target scenario. The solution must satisfy two constraints; the power budget per \(\mathrm{AP}_{\mathrm{tx}}\), and the sSNR in \eqref{sSNR_CF} must be greater than the minimum value. Consequently, the optimization problem can be formulated as follows
\begin{subequations} \label{III_P1}
	\begin{align}
		\mathop{\max }\limits_{\{\mathbf{W}_m, \mathbf{\Delta}_m\}_m} & \sum_{k} \text{SINR}_k \label{III_P1:a}\\  
		\mathrm{s.t.} & \quad \|\mathbf{W}_m \mathbf{\Delta}_m\|_F^2 \leq P_m, \quad \forall m \\
		& \quad \mathrm{sSNR} \geq \Delta, \label{III_P1:c}
	\end{align}
\end{subequations}
where $P_m$ denotes the available power budget at the $m$th \(\mathrm{AP}_{\mathrm{tx}}\), and \(\Delta\) represents the minimum threshold for the sSNR. This threshold is crucial for ensuring that the received signal power is sufficiently strong to enable accurate detection and estimation of the target parameters.

The MUI in \eqref{II_eq06_SINR} significantly increases the complexity of \eqref{III_P1}, making it computationally expensive to find a solution. This complexity arises from the need to jointly account for the interference among multiple users, which complicates the maximization of the SINR objective function in a CF scenario. To address this challenge, we propose a solution that simplifies the optimization problem by canceling the MUI. This approach effectively reduces the dimensionality of the optimization problem, allowing for a more tractable solution.

\subsection{Interference Cancellation via Null-space Projection}
One approach to eliminate the MUI \eqref{mui_CF} between the data streams involves projecting the precoder vectors, i.e., $\mathbf{w}_{m,k} \: \forall m,k$, into null-space (NS) of the interference channels. This projection is performed locally by each \(\mathrm{AP}_{\mathrm{tx}}\), without requiring any information exchange between the \(\mathrm{AP}_{\mathrm{tx}}\)s. As a result, the NS-based optimization problem can be reformulated as follows
\begin{subequations} \label{III_P2}
	\begin{align}
		\mathop{\max }\limits_{\{\mathbf{\hat W}_m, \mathbf{\Delta}_m\}_m} & \sum_{k} \hat{P}_{\text{DS},k} \label{III_P2:a}\\  
		\mathrm{s.t.} & \sum_k\left|\delta_{m,k}\right|^2\|\mathbf{\hat w}_{m,k}\|^2  \leq P_m, \forall m \\
		& \sum_m \sum_n \sum_k\sigma_{m,n}^2 \left| \gamma_n \delta_{m,k}\mathbf{a}_{m,k}^H\mathbf{\hat w}_{m,k}\right|^2  \geq \tilde{\Delta}, \label{III_P2:c}
	\end{align}
\end{subequations}
where 
\begin{equation} \label{desNSsig}
	\hat{P}_{\text{DS},k} = \left|\sum_m\delta_{m,k}\mathbf{\hat h}_{m,k}^H\mathbf{\hat{w}}_{m,k}\right|^2,
\end{equation}
$\mathbf{\hat h}_{m,k}$ and $\mathbf{a}_{m,k}$ are obtained by projecting $\mathbf{h}_{m,k}$ and $\mathbf{a}_{N_{\textit{tx}}}(\theta_m)$ into the null space of the interference channel $\mathbf{H}_{m,\overline{k}}$, defined as
\begin{equation} \label{equivChann_CF}
	\begin{aligned}
		\mathbf{\hat h}_{m,k}   & = \mathbf{P}_{m,k}^H\mathbf{h}_{m,k} \\
		\mathbf{a}_{m,k}  & = \mathbf{P}_{m,k}^H\mathbf{a}_{N_{\textit{tx}}}(\theta_m) 			
	\end{aligned}
\end{equation}
where $\mathbf{P}_{m,k} = \text{null}\left(\mathbf{H}_{m,\overline{k}}\right)$ represents an orthonormal basis spanning the null space of the interference channel
\begin{equation}
	\mathbf{H}_{m,\overline{k}} \triangleq [\mathbf{h}_{m,1}, \cdots ,\mathbf{h}_{m,k-1}, \mathbf{h}_{m,k+1}, \cdots ,\mathbf{h}_{m,K}]^H.
\end{equation}  
Moreover, $\mathbf{\hat w}_{m,k}$ denotes the projected precoder, from which $\mathbf{w}_{m,k} = \mathbf{P}_{m,k}\mathbf{\hat w}_{m,k}$ follows, and $\tilde{\Delta} = \sum_n \left\|\mathbf{g}_n \right\|^2 \sigma_n^2 \Delta$.

\subsection{Iterative Two-stage Distributed Beamforming Design Algorithm}

The design of $\mathbf{\hat{W}}_m$ and $\mathbf{\Delta}_m$ is realized in a distributed manner, where the $\mathbf{\hat{W}}_m$ are computed in parallel across the $M$ $\mathrm{AP}_{\mathrm{tx}}$s, and $\mathbf{\Delta}_m$ is determined at the CU. The data exchange flow between the $\mathrm{AP}_{\mathrm{tx}}$ and the CU is illustrated at a glance in Fig. \ref{01_fig_2_split_method}. Note that the presented schematic is conceptual, emphasizing the data generated by each entity rather than the physical architecture, which can be implemented using various topologies, such as star, mesh, or tree.  

\begin{figure}[!htb]
	\centering
	\includegraphics[width=0.45\textwidth]{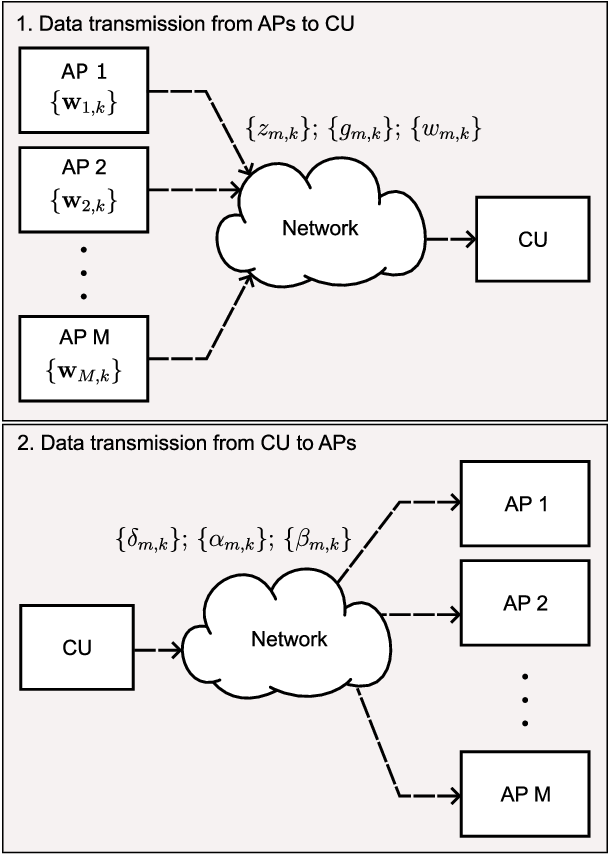}
	\caption{High-level schematic of the data exchange required for the two-stage distributed beamforming design algorithm.}
	\label{01_fig_2_split_method}
\end{figure} 

At the $m$th $\mathrm{AP}_{\mathrm{tx}}$, $\mathbf{\hat{W}}_m$ is computed using information sent by the CU, including the $\delta_{m,k}$ weights specific to that $\mathrm{AP}_{\mathrm{tx}}$. Additionally, data from the other $M-1$ $\mathrm{AP}_{\mathrm{tx}}$s is required, such as the $\alpha_{m,k}$ and $\beta_{m,k}$ values, which represent the equivalent communication and radio-sensing channels, which accounts local and central precoding, respectively. We want to highlight that the computed precoders remain local to each $\mathrm{AP}_{\mathrm{tx}}$, and only partial information is forwarded to the CU.
The CU computes the $\delta_{m,k}$ weights using partial information from the $\mathrm{AP}_{\mathrm{tx}}$, such as $z_{m,k}$ and $g_{m,k}$, which represent the equivalent local communication and radio-sensing channels, respectively. Additionally, $w_{m,k}$ denotes the power of $\mathbf{\hat{w}}_{m,k}$. 
The data exchange between $\mathrm{AP}_{\mathrm{tx}}$ and CU continues iteratively until the number of iterations $N_{\text{iter}}$ is reached.
The details about the required information by the $\mathrm{AP}_{\mathrm{tx}}$ and CU are provided in subsections \ref{sublocPrec} and \ref{subcentPrec}, respectively.

\subsection{Local Precoder Optimization at the $m$th $\mathrm{AP}_{\mathrm{tx}}$} \label{sublocPrec}

As previously mentioned, the optimization of $\mathbf{\hat{w}}_{m,k} \: \forall k$ is performed by the $m$th $\mathrm{AP}_{\mathrm{tx}}$, requiring partial information from the other $\mathrm{AP}_{\mathrm{tx}}$s and the CU. This problem is the same across all $\mathrm{AP}_{\mathrm{tx}}$s, allowing each to be resolved independently and in parallel. Therefore, the optimization problem at the $m$th $\mathrm{AP}_{\mathrm{tx}}$ is given by
\begin{subequations} \label{III_P4}
	\begin{align} 
		\mathop{\max }\limits_{\{\mathbf{\hat w}_{m,k}\}_k} & \sum_k\left|\delta_{m,k}\mathbf{\hat h}_{m,k}^H \mathbf{\hat w}_{m,k} + \sum_{i \neq m} \alpha_{i,k}\right|^2 \label{III_P4:a} \\  
		\mathrm{s.t.}  \quad &   \sum_k|\delta_{m,k}|^2\|\mathbf{\hat w}_{m,k}\|^2  \leq P_m\\
		\quad & \sum_k \sum_n \sigma_{m,n}^2 \left|\gamma_n \delta_{m,k}\mathbf{a}_{k}^H(\theta_m)\mathbf{\hat w}_{m,k}\right|^2 \nonumber \\
		& \quad + \sum_{i\neq m} \sum_k \sum_n \sigma_{i,n}^2 \left|\gamma_n \beta_{i,k} \right|^2 \geq \tilde{\Delta} \label{III_P4:c},
	\end{align}
\end{subequations}
where $\alpha_{i,k} = \delta_{i,k}z_{i,k}$ represents the equivalent communication channel, accounting for both central and local precoding. Here, $z_{i,k}$ denotes the equivalent local communication channel for the other $\mathrm{AP}_{\mathrm{tx}}$s, formally defined as ${z}_{i,k} = \mathbf{\hat h}_{i,k}^H \mathbf{\hat w}_{i,k}$. 
In the same way, $\beta_{i,k} = \delta_{i,k}g_{i,k}$ is the equivalent radio-sensing channel, accounting for both central and local precoding. The term $g_{i,k}$ refers to the local radio-sensing channel for the remaining $\mathrm{AP}_{\mathrm{tx}}$s, defined as ${g}_{i,k} = \mathbf{\hat a}_{i,k}^H \mathbf{\hat w}_{i,k}$. 

The optimization problem \eqref{III_P4} is non-convex due to the nature of the objective function and the radio-sensing related constraint \eqref{III_P4:c}. The maximization objective \eqref{III_P4:a} aims to maximize the sum of absolute values, which is a convex function, resulting in a non-convex optimization problem. On the other hand, the constraint in \eqref{III_P4:c} involves the sum of convex quadratic terms, but since the inequality is in the form of a lower bound, it represents the complement of a convex set. As a result, this constraint describes a non-convex region. 

An effective approach to addressing the previously non-convex problem is to leverage a minimization-majorization (MM) framework. This method allows us to generate a sequence of simpler, convex subproblems that are easier to solve, ultimately leading to an approximate solution for the original non-convex problem. The considered surrogate function is represented as the first-order Taylor polynomial of \eqref{III_P4:a}. The lower bound surrogate optimization problem can be recast as follows
\begin{subequations} \label{III_P5}
	\begin{align} 
		\{\mathbf{\hat w}_{m,k}^{(t)}\}_k = & \mathop{\arg\max }\limits_{\{\mathbf{\hat w}_{m,k}\}_k}  \:2 \Re \left(\sum_k \nabla_{\mathbf{\hat w}_{m,k}} \hat{P}_{\text{DS},k}^H\left(\mathbf{\hat w}_{m,k}^{(t-1)}\right) \mathbf{\hat w}_{m,k} \right)\\  
		\mathrm{s.t.}  \quad &  \sum_k|\delta_{m,k}|^2\|\mathbf{\hat w}_{m,k}\|^2  \leq P_m\\
		\quad & \frac{1}{\sqrt{MNK}} \Re \bigg(\sum_k \sum_n \sigma_{m,n} \gamma_n \delta_{m,k}\mathbf{a}_{k}^H(\theta_m)\mathbf{\hat w}_{m,k} \nonumber \\
		& \quad + \sum_{i\neq m} \sum_k \sum_n \sigma_{i,n} \gamma_n \beta_{i,k} \bigg)  \geq \tilde{\Delta}^{1/2} \label{III_P5:c}, 
	\end{align}
\end{subequations}
where the operator $\nabla_{\mathbf{\hat w}_{m,k}} \hat{P}_{\text{DS},k}$ denotes the gradient of \eqref{desNSsig} with respect to $\mathbf{\hat w}_{m,k}$, formally
\begin{equation}
	\nabla_{\mathbf{\hat w}_{m,k}} \hat{P}_{\text{DS},k} = |\delta_{m,k}|^2 \mathbf{\hat h}_{m,k} \mathbf{\hat h}_{m,k}^H \mathbf{\hat w}_{m,k} + \sum_{i \neq m} \alpha_{i,k} \delta_{m,k}^* \mathbf{\hat h}_{m,k},
\end{equation} 
\eqref{III_P5} serves as a simplified and convex approximation of \eqref{III_P4}. To handle the non-convexity of constraint \eqref{III_P4:c}, it was replaced by \eqref{III_P5:c}. This new constraint is convex but it is more restrictive since the
set defined by this constraint is a subset of \eqref{III_P4:c}, accordingly to proposition \ref{ch5__split_prep1} \cite{leyva2024hybrid}. 
\begin{prop} \label{ch5__split_prep1}
	Let us define $\mathcal{B} = \{\mathbf{\hat w}_{m,k} \forall k:    \eqref{III_P4:c}\}$ and $\mathcal{\hat{B}} = \{\mathbf{\hat w}_{m,k} \forall k:    \eqref{III_P5:c}\}$, then $\mathcal{\hat{B}} \subseteq \mathcal{B}$.
\end{prop}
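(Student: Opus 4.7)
The plan is to establish $\mathcal{\hat B}\subseteq\mathcal{B}$ by a direct implication argument: take any $\{\mathbf{\hat w}_{m,k}\}_k$ satisfying the tightened constraint \eqref{III_P5:c}, and show that \eqref{III_P4:c} follows. The strategy rests on two elementary inequalities, namely $|z|\geq\Re(z)$ for complex scalars, and a Cauchy--Schwarz-with-all-ones bound that relates the sum of squared magnitudes of $MNK$ scalars to the square of their sum of magnitudes. This is precisely the bound that motivates the normalization factor $1/\sqrt{MNK}$ appearing in \eqref{III_P5:c}.

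First, I would unify notation by indexing the $MNK$ complex scalars implicit in \eqref{III_P4:c} under a single index $j$. Specifically, let $c_j=\sigma_{m,n}\gamma_n\delta_{m,k}\mathbf{a}_{k}^H(\theta_m)\mathbf{\hat w}_{m,k}$ when $j$ corresponds to a pair $(k,n)$ with the fixed $m$, and $c_j=\sigma_{i,n}\gamma_n\beta_{i,k}$ when $j$ corresponds to a triple $(i,k,n)$ with $i\neq m$. Since the $\sigma_{\cdot,n}$ are non-negative real numbers, \eqref{III_P4:c} is equivalent to $\sum_j |c_j|^2\geq\tilde\Delta$, while \eqref{III_P5:c} reads $\tfrac{1}{\sqrt{MNK}}\,\Re\!\left(\sum_j c_j\right)\geq\tilde\Delta^{1/2}$.

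Next, I would chain the inequalities
$$\sum_j|c_j|^2 \;\geq\; \frac{1}{MNK}\Bigl(\sum_j|c_j|\Bigr)^2 \;\geq\; \frac{1}{MNK}\Bigl(\sum_j\Re(c_j)\Bigr)^2 \;\geq\; \tilde\Delta,$$
where the first step is Cauchy--Schwarz applied to $(|c_j|)_j$ and the all-ones vector of length $MNK$, the second uses $|c_j|\geq\Re(c_j)$ summed over $j$ followed by squaring, and the last step is the square of \eqref{III_P5:c}. Non-negativity of the right-hand side of \eqref{III_P5:c} (since $\tilde\Delta^{1/2}\geq 0$) forces $\Re\!\left(\sum_j c_j\right)$ to be non-negative, which legitimizes both the squaring in the middle step and the squaring of the constraint itself, so no sign case analysis is needed.

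I do not anticipate any substantive obstacle; the argument is essentially two lines once the notation is set up. The only point requiring care is ensuring non-negativity before squaring, but this is built into the structure of \eqref{III_P5:c} itself. The resulting inclusion $\mathcal{\hat B}\subseteq\mathcal{B}$ confirms that the surrogate sensing constraint used in the MM iterate \eqref{III_P5} is a conservative (inner) convex approximation of the original non-convex sensing requirement \eqref{III_P4:c}, so any feasible point of \eqref{III_P5} is automatically feasible for \eqref{III_P4}, as desired for the majorization-minimization framework to be sound.
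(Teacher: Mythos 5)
Your proof is correct and follows essentially the same route as the paper: the paper stacks the $Q=MNK$ terms of \eqref{III_P4:c} into a vector $\mathbf{x}$ and lower-bounds the convex norm $\|\mathbf{x}\|$ by the linear function $\Re(\mathbf{1}^T\mathbf{x})/\sqrt{Q}$ evaluated at $\mathbf{x}_0=\mathbf{1}$, which is exactly your Cauchy--Schwarz-plus-$\Re(c_j)\le|c_j|$ chain phrased as a supporting-hyperplane bound on the convex norm. Your explicit check that \eqref{III_P5:c} forces $\Re\left(\sum_j c_j\right)\ge 0$ before squaring is a minor care point the paper leaves implicit.
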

\begin{proof}
	Sets $\mathcal{B}$ and $\mathcal{\hat{B}}$ define the points satisfying constraints \eqref{III_P4:c} and \eqref{III_P5:c}. As $\|\mathbf{x}\| \in \mathbb{C}^Q$ is a convex function, it can be lower-bounded by the linear function $\Re(\mathbf{x}_0^H\mathbf{x}/\|\mathbf{x}_0\|)$, which is valid for any $\mathbf{x}_0$. For $\mathbf{x}_0 = \mathbf{1}$ follows the inequality $\|\mathbf{x}\| \geq \Re(\mathbf{1}^T\mathbf{x}/\|\mathbf{1}\|) = 1/\sqrt{Q} \sum_q\Re(\mathbf{x}^H\mathbf{e}_q)$. If $\mathbf{x} = [x_1, \cdots, x_Q]$, where $x_q$ represents the $q$th term in  \eqref{III_P4:c}, then proposition \ref{ch5__split_prep1} follows.
\end{proof}
From the previous proposition, it follows that the optimal value of the original problem is less than or equal to the value of the new problem, while the solution satisfies the original constraints. The optimization problem \eqref{III_P5} is now convex and can be efficiently solved using convex optimization solver tools such as \texttt{CVX} \cite{cvx}.

\subsection{Precoder Optimization at the CU}
\label{subcentPrec}

The values of the local equivalent communication and radio-sensing channels, \( z_{m,k} \) and \( g_{m,k} \), along with the power of the precoder vectors defined as \( w_{m,k} = \|\mathbf{\hat w}_{m,k}\|^2 \), are transmitted to the CU by the \( M \) \( \mathrm{AP}_{\mathrm{tx}} \). With this information, the CU computes the weights \( \delta_{m,k} \) that solve \eqref{III_P2}. Hence, \eqref{III_P2} can be recast at the CU as
\begin{subequations} \label{III_P6}
	\begin{align} 
		\mathop{\max }\limits_{\{\delta_{m,k}\}_{m,k}} & \sum_k \left|\sum_m \delta_{m,k}z_{m,k}\right|^2   \\  
		\mathrm{s.t.}  &\quad   \sum_k w_{m,k} |\delta_{m,k}|^2  \leq P_m \quad \forall m  \\
		& \quad  \sum_m \sum_n \sum_k \sigma_{m,n}^2 \left|\gamma_n \delta_{m,k} {g}_{m,k}\right|^2 \geq \tilde \Delta.  
	\end{align}
\end{subequations}

Notice that the \eqref{III_P6} is similar to \eqref{III_P4}.  Hence, the lower-bound surrogate optimization problem can be reformulated as follows  
\begin{subequations} \label{III_P7}
	\begin{align} 
		 \{\delta_{m,k}^{(t)}\}_{m,k} 	& = \nonumber \\  
										& \hspace{-.3 cm } \mathop{\arg\max }\limits_{\{\delta_{m,k}\}_{m,k}} \:  2\Re\left(\sum_m \sum_k \nabla_{\delta_{m,k}} \hat{P}_{\text{DS},k}^H\left(\delta_{m,k}^{(t-1)}\right) \delta_{m,k} \right)  \label{III_P7:a} \\  
					 \mathrm{s.t.}  \:  & \sum_k w_{m,k} |\delta_{m,k}|^2  \leq P_m \quad \forall m  \\ 
										& \frac{1}{\sqrt{MNK}}\Re \left(\sum_m \sum_n \sum_k\sigma_{m,n} \gamma_n \delta_{m,k} {g}_{m,k}\right) \nonumber \\ 
										& \hspace{5cm}  \geq \tilde \Delta^{1/2}    
	\end{align}
\end{subequations}
where $\nabla_{\delta_{m,k}} \hat{P}_{\text{DS},k}$ represents the gradient of \eqref{desNSsig} with respect to $\delta_{m,k}$, formally
\begin{equation}
	\nabla_{\delta_{m,k}} \hat{P}_{\text{DS},k} = \delta_{m,k} |z_{m,k}|^2 + z_{m,k}^* \sum_{i \neq m} \delta_{i,k}z_{i,k},
\end{equation} 
The optimization problem \eqref{III_P7} is convex and can be efficiently solved using for example the convex optimization library \texttt{CVX} \cite{cvx}. The proposed two-stage distibuted beamforming design algorithm is outlined in Algorithm \ref{alg:Alg1_CF}.
\begin{algorithm}[]
		\caption{Two-stage Distributted Beamforming Design Algorithm}
		\label{alg:Alg1_CF}
		\SetAlgoLined
		\KwIn{$\theta_{m}, P_m \: \forall m$}
		\KwOut{$\mathbf{W}_m, \mathbf{\Delta}_m \: \forall m$}
		1. The APs estimate $\mathbf{h}_{m,k} \: \forall m,k$\;
		2. The APs compute the equivalent channels that eliminate the inter-user interference as in \eqref{equivChann_CF}\;
		3. The APs randomly generates $\mathbf{\hat w}_{m,k}^{(0)} \: \forall m,k$\; 
		4. Initialize $\delta_{m,k} = 1 \: \forall m,k$,  $z_{m,k} = 0 \: \forall m,k$, $g_{m,k} = \frac{1}{M} \sqrt{\frac{\mathbf{\tilde \Delta}}{K}} \: \forall m,k$\;
		\For{$j = 1:N_{\mathrm{iter}}$}{
			5. In parallel, the $M$ $\mathrm{AP}_{\mathrm{tx}}$ compute $\mathbf{\hat w}_{m,k}$ as in \eqref{III_P4}\;
			6. The $M$ $\mathrm{AP}_{\mathrm{tx}}$ forwards $w_{m,k}, z_{m,k}, g_{m,k}$ to the CU\;  
			7. The CU computes $\delta_{m,k}, \alpha_{m,k}, \beta_{m,k}$ as in \eqref{III_P6} and forwards them to the $M$ $\mathrm{AP}_{\mathrm{tx}}$;
		}
		\KwRet{$\mathbf{w}_{m,k}$, $\delta_{m,k}$}
\end{algorithm}

Algorithm \ref{alg:Alg1_CF} computes the precoders through an iterative cooperative approach between the $\mathrm{AP}_{\mathrm{tx}}$s and the CU in a CF ISAC system.
Initially, each $\mathrm{AP}_{\mathrm{tx}}$ estimates the CSI and computes the equivalent channels to eliminate inter-user interference. The APs initialize variables in step (4) and compute the precoders $\mathbf{\hat w}_{m,k}$. These precoders are then used to update parameters in step (6) and are forwarded to the CU.
The CU computes $\delta_{m,k}$, which are then used to update $\alpha_{m,k}$ and $\beta_{m,k}$ in step (7) before being sent back to the  $\mathrm{AP}_{\mathrm{tx}}$s.
The algorithm iteratively refines the central and local precoders until the maximum number of iterations, \( N_{\mathrm{iter}} \), between the $\mathrm{AP}_{\mathrm{tx}}$s and the CU is reached.

\subsection{Fronthaul Load and Computational Complexity Analysis}

This subsection analyzes the fronthaul load and computational complexity at the CU of our proposal. Additionally, we compare the fronthaul load with existing contributions in the literature, while the computational complexity is compared to the centralized beamforming solution of \eqref{III_P1}
\footnote{Notice that the optimization problem in \eqref{III_P1} is the same non-convex problem our proposal addresses. Similar to the two-stage distributed beamforming algorithm, the non-convexity of the centralized approach is tackled using the MM technique; however, it involves a higher dimensionality of the optimization variables. Moreover, since the main focus of this paper is the distributed signal processing approach, detailed insights into the centralized optimization problem's solution are not provided.}. 

As depicted in Fig. \ref{01_fig_2_split_method}, the proposed two-stage distributed beamforming design algorithm requires the exchange of $3MK$ complex scalars between the $\mathrm{AP}_{\mathrm{tx}}$s and the CU, and another $3MK$ complex scalars in the opposite direction, i.e., from the CU to the $\mathrm{AP}_{\mathrm{tx}}$s, during each iteration. In the $\mathrm{AP}_{\mathrm{tx}}$-to-CU direction, these scalars include $z_{m,k}$, $g_{m,k}$, and $w_{m,k}$, while in the CU-to-$\mathrm{AP}_{\mathrm{tx}}$ direction, they include $\delta_{m,k}$, $\alpha_{m,k}$, and $\beta_{m,k}$. For $N_{\mathrm{iter}}$ iterations, the total number of complex scalars exchanged is $6N_{\mathrm{iter}}MK$. It is important to note that interchanged data values, as previously defined, are independent of $N_{\mathrm{tx}}$.
In contrast, centralized approaches \cite{Demirhan2023, Mao2023GlobeCom, Mao2023, Cao2023CF, Liu2024} require the transmission of the full CSI between the $\mathrm{AP}_{\mathrm{tx}}$s and the CU, as well as the precoders in the opposite direction. This results in the transmission of $2N_{\text{tx}}MK$ complex scalars, without the need for multiple iterations.
The independence from $N_{\mathrm{tx}}$ in our proposal significantly reduces the fronthaul load, particularly in CF massive MIMO systems, where access points may be equipped with numerous antennas. Consequently, the two-stage distributed solution effectively reduces the data exchange between the APs and the CU, enhancing system efficiency while minimizing fronthaul requirements.

The complexity of the optimization problems for the two-stage distributed beamforming design algorithm and the centralized beamforming algorithm, both solved using \texttt{CVX}, differ significantly due to the number of optimization variables. In the case of the two-stage distributed beamforming design, the optimization problem at the CU involves \(MK\) complex variables. The complexity per iteration of the interior-point method for this problem scales as \(O(M^3K^3)\), and with a total number of iterations scaling with \(O(\sqrt{MK})\), the overall complexity is approximately \(O(M^3K^3\sqrt{MK})\). In contrast, the centralized beamforming algorithm's optimization problem involves a much larger number of variables, specifically \(N_{\mathrm{tx}}MK\) complex variables, where \(N_{\mathrm{tx}}\) is the number of antennas at each $\mathrm{AP}_{\mathrm{tx}}$. The per iteration complexity in this case scales as \(O(N_{\mathrm{tx}}^3 M^3 K^3)\), and the total number of iterations is proportional to \(O(\sqrt{N_{\mathrm{tx}}MK})\), resulting in an overall complexity of \(O(N_{\mathrm{tx}}^3 M^3 K^3 \sqrt{N_{\mathrm{tx}}MK})\). The dependence on \(N_{\mathrm{tx}}\) leads to a significant increase in complexity, particularly for systems with large antenna arrays.
Thus, the two-stage distributed beamforming algorithm is far more computationally efficient than the centralized approach, especially in CF massive MIMO systems with many antennas per AP. The distributed method's independence from \(N_{\mathrm{tx}}\) makes it a more scalable and practical solution when minimizing fronthaul load and computational demands at the CU.

\section{Numerical Results} \label{sectIV_CF}

This section demonstrates the effectiveness of the proposed two-stage distributed beamforming algorithm (TsDBA). Additionally, the TsDBA is compared to the centralized beamforming solution of \eqref{III_P1}. As stated in footnote 1, the non-convexity is addressed by relying on the MM technique.
In the centralized solution, the beamforming design is entirely performed at the CU.
It is important to note that, to the best of the authors knowledge, there are no existing centralized contributions comparable to the TsDBA. The centralized methods proposed in \cite{Demirhan2023, Mao2023GlobeCom, Mao2023, Liu2024} are radar-centric, meaning they primarily focus on optimizing radar performance. In contrast, our method is communication-centric, prioritizing communication performance while maintaining sensing capabilities. 
Additionally, the communication-centric approach in \cite{Cao2023CF} uses the radar estimation rate as a metric for constraints, which encompasses latency and Doppler estimation rates in its computation. Conversely, the radar-sensing metric considered as a constraint in this work is the sSNR, which is independent of the latency and Doppler estimation rates. 
In summary, the differences with the aforementioned contributions render these methods incomparable to the TsDBA.

The following results were obtained for a CF ISAC scenario with \( M = 4 \) \(\mathrm{AP}_{\mathrm{tx}}\), \( N = 2 \) \(\mathrm{AP}_{\mathrm{rx}}\), and \( K = 2 \) UEs. The angles of the target relative to the \(M\) \(\mathrm{AP}_{\mathrm{tx}}\) are \( \theta_1 = -15^{\circ} \), \( \theta_2 = 35^{\circ} \), \( \theta_3 = 5^{\circ} \), and \( \theta_4 = 40^{\circ} \). Similarly, the angles from \(\mathrm{AP}_{\mathrm{rx}}\) are \( \phi_1 = 10^{\circ} \) and \( \phi_2 = -20^{\circ} \).
The $\mathrm{AP}_{\mathrm{tx}}$ and $\mathrm{AP}_{\mathrm{rx}}$ are equipped with ULAs, holding $N_{\mathrm{tx}} = 32$ and $N_{\mathrm{rx}} = 32$ antenna elements, respectively, spaced by $\lambda/2$. 
The communication channel is modeled according to \eqref{channMo}, with the number of paths set to $10$. In addition, $\psi_{m,k}^{(l)}$ is assumed to follow a uniform distribution over the range $[-\pi/2, \pi/2]$, while $\alpha_{m,k}^{(l)}$ follows a complex Gaussian distribution with a mean of zero and unit variance.
Regarding the sSNR, the noise variance at the $\mathrm{AP}_{\mathrm{rx}}$s is set to $\sigma_{n}^2 = -20$ dB. 
In addition, the $\mathrm{AP}_{\mathrm{rx}}$s are assumed to know the AoA (i.e., $\phi_n : \forall n$). A Maximum Ratio Combining (MRC) equalizer is employed at the $\mathrm{AP}_{\mathrm{rx}}$s primarily for its simplicity; while other selections could be made, the MRC is chosen for its straightforward implementation. The MRC is defined as 
\begin{equation} \mathbf{g}n = 				\frac{1}{\sqrt{N{\text{rx}}}}\mathbf{a}(\phi_n). 
\end{equation}
For simplicity, we assume that the variance of $\hat{\alpha}_{m,n} : \forall m,n$ is $\sigma_{m,n}^2 = -10$ dB.
Finally, the power budget of the distributed beamforming matrices is set to $P_m = 1 \: \forall m$.  

The remainder of this section analyzes the convergence of the proposed TsDBA. We then examine the impact of the sSNR on the transmit beampattern, followed by a study of the relationship between the sSNR constraint and the achievable sum SNR. Finally, we analyze the fronthaul load.

Fig. \ref{res01_CF} illustrates the convergence of the TsDBA by plotting the mean sum SINR as a function of the number of iterations, with \(N_{\text{iter}} = 10\) and \(\Delta \in \{30, 40\}\) dB. The results represent the mean over \(100\) random realizations of the communication channels \(\mathbf{w}_{m,k} \: \forall m,k\).

\begin{figure}[!htb]
	\centering
	\includegraphics[width=0.49\textwidth]{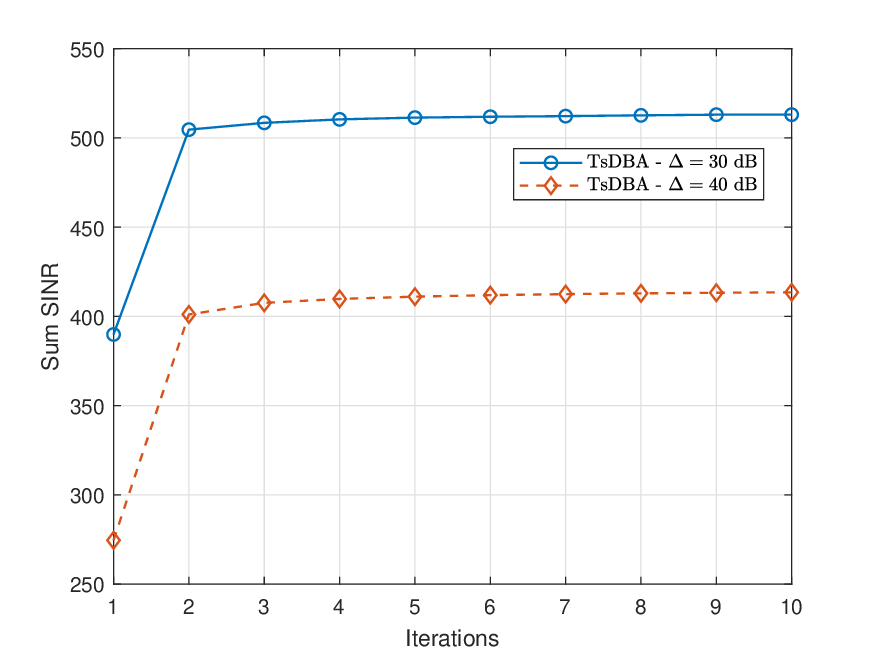}
	\caption{Convergence of TsDBA under radio-sensing constraints for $\Delta \in \{30, 40\}$ dB.}
	\label{res01_CF}
\end{figure} 
 
Fig. \ref{res01_CF} shows that the algorithm converges rapidly within the first few iterations, reporting higher sum SINR for the less stringent radio-sensing constraint \({\Delta} = 30 \, \text{dB}\) compared to ${\Delta} = 40$ dB, which reflects the trade-off between radio-sensing and communication SINR performance. Notably, after just $3$ iterations, the algorithm achieves approximately $99$\% of the maximum value for ${\Delta} = 30$ dB and around $98$\% for ${\Delta} = 40$ dB, indicating that a small number of iterations is sufficient to reach the maximum performance. Thus, $3$ iterations is an appropriate choice to balance computational complexity and convergence accuracy.

Fig. \ref{res02_CF} displays the transmitted beampattern obtained for $\mathrm{AP}_{\mathrm{tx},1}$ and $\mathrm{AP}_{\mathrm{tx},2}$ under two sSNR constraints, i.e., \(\Delta \in \{40, 46\}\) dB. Here, \(46\) dB is the maximum value of \(\Delta\) that renders the problem feasible under the specified parameters (i.e., \(N_{\text{tx}} = 32\), \(P_m = 1\)). The scenario involves $K = 2$ UEs, with the channel randomly generated using \eqref{channMo}. For both constraints, the same communication channel was considered.
\vspace{-.3cm}
\begin{figure}[!htb]
	\centering
	\begin{subfigure}[b]{0.49\textwidth} 
		\centering
		\includegraphics[width=\textwidth]{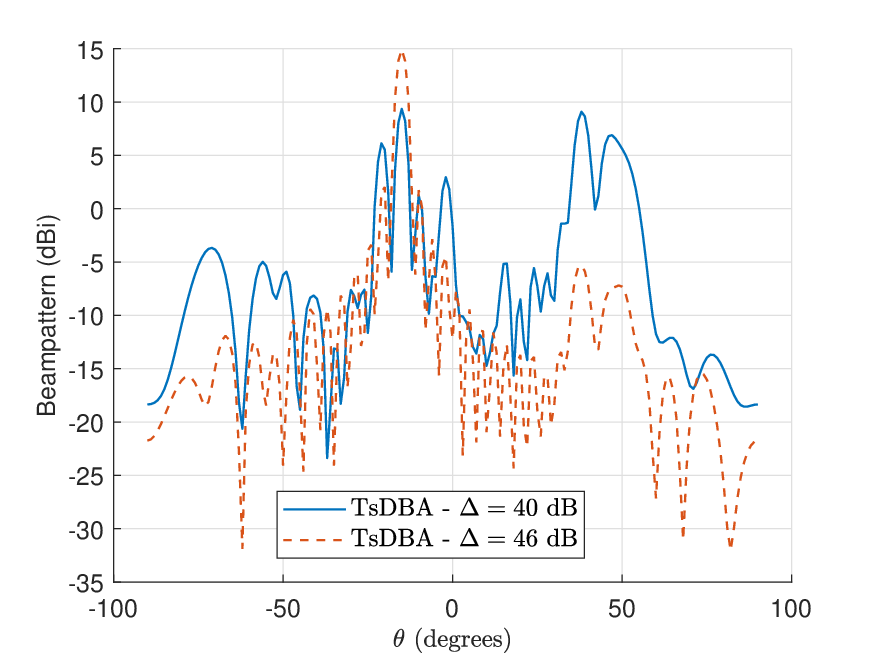}
		\caption{}
		\label{fig:ap1}
	\end{subfigure}
	\begin{subfigure}[b]{0.49\textwidth}
		\centering
		\includegraphics[width=\textwidth]{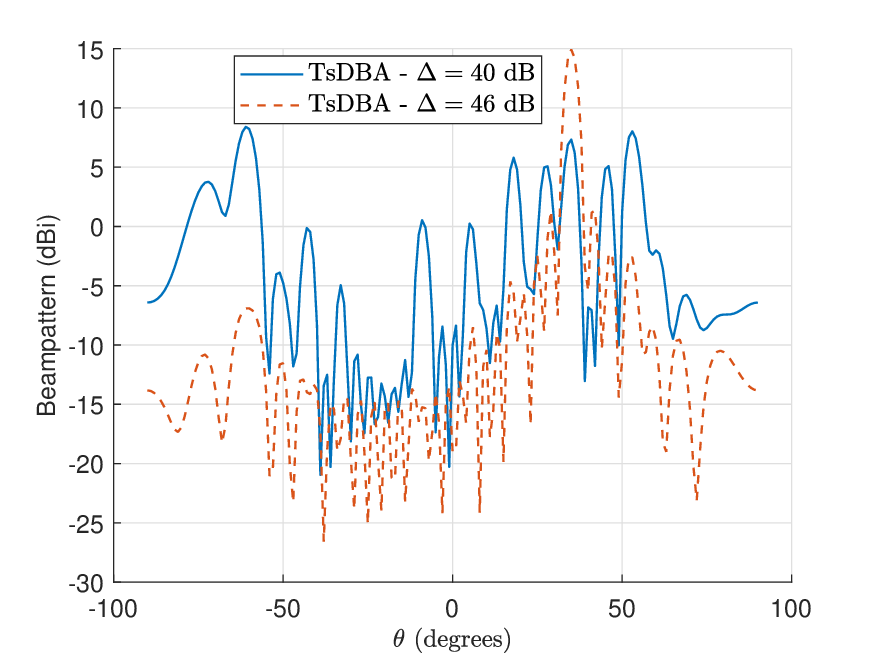}
		\caption{}
		\label{fig:ap2}
	\end{subfigure}
	\caption{Transmit beampatterns obtained by the TsDBA. (a) transmit beampattern of \(\mathrm{AP}_{\mathrm{tx},1}\), with $\theta_1 = -15^{\circ}$; (b) transmit beampattern of \(\mathrm{AP}_{\mathrm{tx},2}\), with $\theta_2 = 35^{\circ}$.}
	\label{res02_CF}
\end{figure}

The results illustrate the relationship between \(\Delta\) and the transmitted beampattern for \(\mathrm{AP}_{\mathrm{tx},1}\) and \(\mathrm{AP}_{\mathrm{tx},2}\). The beampatterns for \(\mathrm{AP}_{\mathrm{tx},3}\) and \(\mathrm{AP}_{\mathrm{tx},4}\) are not shown, but their results are similar to those of \(\mathrm{AP}_{\mathrm{tx},1}\) and \(\mathrm{AP}_{\mathrm{tx},2}\). From Fig. \ref{res02_CF}, it is evident that as the value of \(\Delta\) increases, the gain of the transmitted beampattern toward the target also increases. Furthermore, Fig. \ref{res02_CF} demonstrates that for \(\Delta = 46\) dB, the transmitted beampattern shows a significant reduction in the power of the lateral lobes, which correspond to directions away from the target. This suggests that less signal power is being directed toward the UEs, resulting in a decrease in the signal power received by them. The following result examines this trade-off and compares it with the centralized solution.

Fig. \ref{res03_CF} shows the trade-off between the mean of the sum of the SINR and the sSNR constraint $\Delta$. The sum of the SINR is calculated for $\Delta$ values ranging from 30 to 44 dB. For the TsDBA, we have set $N_{\text{iter}} = 3$, as this was shown to be an appropriate choice.
\begin{figure}[!htb]
	\centering
	\includegraphics[width=0.49\textwidth]{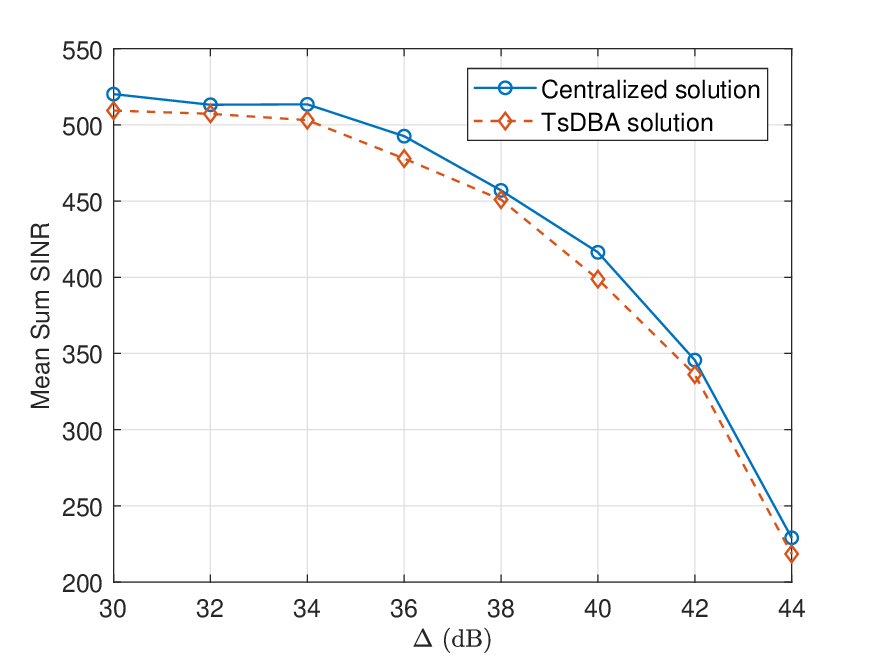}
	\caption{The trade-off between the sSNR constraint value $\Delta$ and the sum of the SINR, for CF-ISAC scenario with $M = 4$, $N = 2$, and $K = 2$.}
	\label{res03_CF}
\end{figure} 

The results presented in Fig. \ref{res03_CF} indicate that the proposed TsDBA solution demonstrates a performance close to the centralized solution. Also, both solutions exhibit a decreasing trend in mean sum SINR as $\Delta$ increases. This trend highlights a trade-off between the radio-sensing constraints and the achievable sum SINR. Although the centralized solution slightly outperforms the TsDBA across $\Delta$ values, the TsDBA significantly reduces fronthaul load requirements.

Fig. \ref{res04_CF} compares the fronthaul load required for beamforming design between the centralized and TsDBA solutions as a function of $N_{\text{tx}}$. The results were obtained using $N_{\text{iter}} = 3$, with $M \in \{4,16\}$ $\mathrm{AP}_{\mathrm{tx}}$ and $K \in \{2, 8\}$ UEs. 
\begin{figure}[!htb]
	\centering
	\includegraphics[width=0.49\textwidth]{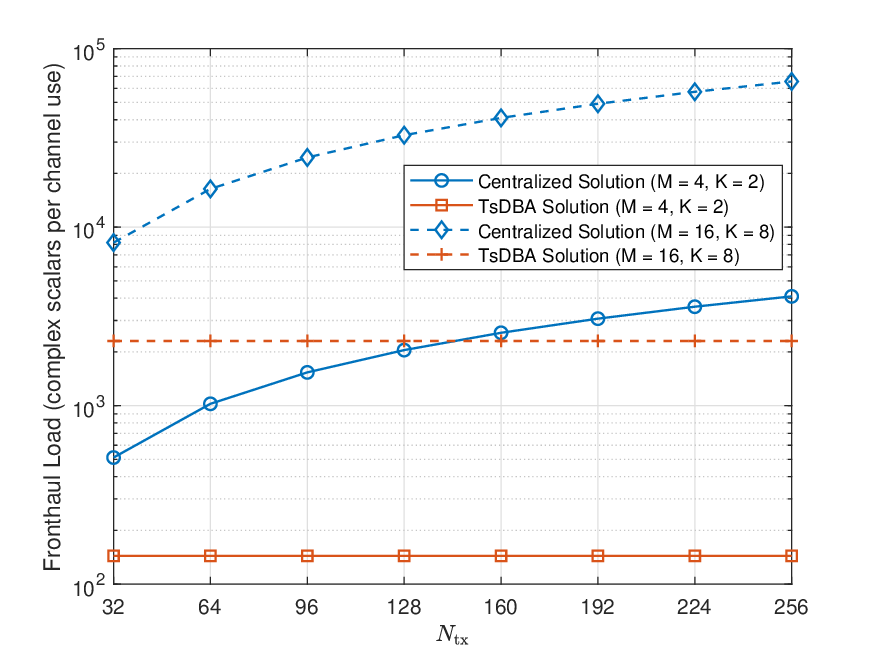}
	\caption{Fronthaul load comparison for beamforming design between the TsDBA and centralized solutions.}
	\label{res04_CF}
\end{figure} 
The fronthaul load of the centralized solution increases with \(N_{\text{tx}}\), leading to significant data transmission demands between the \(\mathrm{AP}_{\mathrm{tx}}\)s and the CU. In contrast, the TsDBA demonstrates a flat curve, indicating that the fronthaul load remains constant, regardless of \(N_{\text{tx}}\).
This difference arises because, in the TsDBA, the \(\mathrm{AP}_{\mathrm{tx}}\)s transmit equivalent communication and radio-sensing channels to the CU, rather than the full CSI required by the centralized solution.

\section{Conclusions} \label{sectV_CF}
This paper introduced a novel two-stage distributed beamforming design algorithm for the CF ISAC paradigm. By distributing the signal processing tasks between the \(\mathrm{AP}_{\mathrm{tx}}\)s and the CU, our approach effectively reduces the fronthaul load, addressing one of the key challenges in scaling CF-ISAC systems. The proposed method optimizes the sum of SINR for communication users while satisfying per-AP power constraints and sSNR requirements for sensing tasks. We formulated the resulting non-convex optimization problems and solved them using an iterative MM algorithm, which decomposes the problem into simpler, convex subproblems. Our results demonstrate that the two-stage distributed beamforming design algorithm achieves comparable performance to centralized solution, with the added benefits of reduced computational complexity and significantly reduced fronthaul load. The proposed two-stage distributed beamforming design algorithm provides a promising solution for enhancing the efficiency and scalability of CF-ISAC systems.

\section*{Acknowledgments}
This work has received funding from the FCT - Fundação para a Ciência e a Tecnologia under the PhD Research Studentships 2022.12379.BD and REVOLUTION project 2022.08005.PTDC, and from FCT/MCTES through national funds and when applicable co-funded EU funds under the project UIDB/50008/2020-UIDP/50008/2020. Additionally, this work has received funding from the European Union’s Horizon Europe research under the Smart Networks and Services Joint Undertaking (SNS JU) project 6GMUSICAL, grant agreement No. 101139176.

%
%
%
%
%
%
%
%
%

\end{document}